\newcommand{\keywords}[1]{\par\addvspace\baselineskip
\noindent\keywordname\enspace\ignorespaces#1}
\newcommand{\tq}{\text{ s.t. }}
\DeclareMathOperator{\linearspan}{vect}
\newcommand{\moinssim}{\leq}
\newcommand{\R}{\mathbb{R}}
\newcommand{\N}{\mathbb{N}}
\newcommand{\intent}[2]{\ensuremath{\{ #1, \ldots, #2 \}}} 
\newcommand{\ens}[1]{\mathcal{#1}}
\newcommand{\bipoint}[1]{\overrightarrow{#1}}
\newcommand{\vecteur}[1]{\overrightarrow{#1}}
\newcommand{\vectutil}[1]{\accentset{\sim}{#1}}
\newcommand{\vectutilapprob}[1]{\accentset{\sim}{#1}}
\newcommand{\classelin}[1]{\accentset{\sim}{#1}}
\newcommand{\bra}[1]{\langle \, #1 \, |} 
\newcommand{\bracket}[2]{\langle \, #1 \, | \, #2 \, \rangle} 
\newcommand{\bigbracket}[2]{\left\langle\left. \, #1 \, \right| \, #2 \, \right\rangle}
\newtheorem{thm}{Theorem}
\newtheorem{prop}[thm]{Proposition}
\newtheorem{lem}[thm]{Lemma}
\newtheorem{exemple}[thm]{Example}
\newcommand{\savedx}{0}
\newcommand{\savedy}{0}
\newcommand{\savedz}{0}
\newcommand{\rotateRPY}[4][0/0/0]
{   \pgfmathsetmacro{\rollangle}{#2}
    \pgfmathsetmacro{\pitchangle}{#3}
    \pgfmathsetmacro{\yawangle}{#4}

    \pgfmathsetmacro{\newxx}{cos(\yawangle)*cos(\pitchangle)}
    \pgfmathsetmacro{\newxy}{sin(\yawangle)*cos(\pitchangle)}
    \pgfmathsetmacro{\newxz}{-sin(\pitchangle)}
    \path (\newxx,\newxy,\newxz);
    \pgfgetlastxy{\nxx}{\nxy};

    \pgfmathsetmacro{\newyx}{cos(\yawangle)*sin(\pitchangle)*sin(\rollangle)-sin(\yawangle)*cos(\rollangle)}
    \pgfmathsetmacro{\newyy}{sin(\yawangle)*sin(\pitchangle)*sin(\rollangle)+ cos(\yawangle)*cos(\rollangle)}
    \pgfmathsetmacro{\newyz}{cos(\pitchangle)*sin(\rollangle)}
    \path (\newyx,\newyy,\newyz);
    \pgfgetlastxy{\nyx}{\nyy};

    \pgfmathsetmacro{\newzx}{cos(\yawangle)*sin(\pitchangle)*cos(\rollangle)+ sin(\yawangle)*sin(\rollangle)}
    \pgfmathsetmacro{\newzy}{sin(\yawangle)*sin(\pitchangle)*cos(\rollangle)-cos(\yawangle)*sin(\rollangle)}
    \pgfmathsetmacro{\newzz}{cos(\pitchangle)*cos(\rollangle)}
    \path (\newzx,\newzy,\newzz);
    \pgfgetlastxy{\nzx}{\nzy};

    \foreach \x/\y/\z in {#1}
    {   \pgfmathsetmacro{\transformedx}{\x*\newxx+\y*\newyx+\z*\newzx}
        \pgfmathsetmacro{\transformedy}{\x*\newxy+\y*\newyy+\z*\newzy}
        \pgfmathsetmacro{\transformedz}{\x*\newxz+\y*\newyz+\z*\newzz}
        \xdef\savedx{\transformedx}
        \xdef\savedy{\transformedy}
        \xdef\savedz{\transformedz}     
    }
}
\tikzset{RPY/.style={x={(\nxx,\nxy)},y={(\nyx,\nyy)},z={(\nzx,\nzy)}}}
\begin{document}

\mainmatter  

\title{Geometry on the Utility Space}

\titlerunning{Geometry on the Utility Space}

\author{Fran\c cois Durand\inst{1}\and Beno\^it Kloeckner\inst{2}\and Fabien Mathieu\inst{3}\and Ludovic Noirie\inst{3}}

\institute{Inria, Paris, France, \email{francois.durand@inria.fr}%
\and Universit\'e Paris-Est, Laboratoire d'Analyse et de Math\'ematiques Appliqu\'ees (UMR 8050), UPEM, UPEC, CNRS, F-94010, Cr\'eteil, France, \email{benoit.kloeckner@u-pec.fr}%
\and Alcatel-Lucent Bell Labs France, Nozay, France, 
\mailsa\mailsc}

\toctitle{Geometry on the Utility Space}
\tocauthor{F. Durand, B. Kloeckner, F. Mathieu and L. Noirie}
\maketitle

\begin{abstract}

We study the geometrical properties of the utility space (the space of expected utilities over a finite set of options), which is commonly used to model the preferences of an agent in a situation of uncertainty.
We focus on the case where the model is neutral with respect to the available options, i.e. treats them, \emph{a priori}, as being symmetrical from one another.
Specifically, we prove that the only Riemannian metric that respects the geometrical properties and the natural symmetries of the utility space is the round metric.
This canonical metric allows to define a uniform probability over the utility space and to naturally generalize the Impartial Culture to a model with expected utilities.
\keywords{Utility theory, Geometry, Impartial culture, Voting}
\end{abstract}

\section{Introduction}

\noindent\textbf{Motivation.}
In the traditional literature of Arrovian social choice \cite{arrow1950difficulty}, especially voting theory, the preferences of an agent are often represented by ordinal information only\footnote{This is not always the case: for example, Gibbard~\cite{gibbard1978lotteries} considers voters with expected utilities over the candidates.}: a strict total order over the available options, or sometimes a binary relation of preference that may not be a strict total order (for example if indifference is allowed). However, it can be interesting for voting systems to consider cardinal preferences, for at least two reasons.

Firstly, some voting systems are not based on ordinal information only, like Approval voting or Range voting.

Secondly, voters can be in a situation of uncertainty, either because the rule of the voting system involves a part of randomness, or because each voter has incomplete information about other voters' preferences and the ballots they will choose.
To express preferences in a situation of uncertainty, a classical and elegant model is the one of expected utilities \cite{vonneumann1944games,fishburn1970utility,kreps1990microeconomic,mascolell1995microeconomic}. The utility vector $\vecteur{u}$ representing the preferences of an agent is an element of $\R^m$, where $m$ is the number of available options or \emph{candidates}; the utility of a lottery over the options is computed as an expected value.

For a broad set of applications in economics, the options under consideration are financial rewards or quantities of one or several economic goods,
which leads to an important consequence: there is a natural structure over the space of options.
For example, if options are financial rewards, there is a natural order over the possible amounts and it is defined prior to the agents' preferences.%

We consider here the opposite scenario where options are symmetrical \emph{a priori}. This symmetry condition is especially relevant in voting theory, by a normative principle of neutrality, but it can apply to other contexts of choice under uncertainty when there is no natural preexistent structure on the space of available options.

The motivation for this paper came from the possible generalizations of the \emph{Impartial Culture} to agents with expected utilities. The Impartial Culture is a classical probabilistic model in voting theory where each agent draws independently her strict total order of preference with a uniform probability over the set of possible orders.

The difficulty is not to define a probability law over utilities such that its projection on ordinal information is the Impartial Culture. Indeed, it is sufficient to define a distribution where voters are independent and all candidates are treated symmetrically. The issue is to choose one in particular: an infinity of distributions meet these conditions and we can wonder whether one of these generalizations has canonical reasons to be chosen rather than the others.

To answer this question, we need to address an important technical point. As we will see, an agent's utility vector is defined up to two constants, and choosing a specific normalization is arbitrary. As a consequence, the utility space is a quotient space of $\R^m$, and \emph{a priori}, there is no canonical way to push a metric from $\R^m$ to this quotient space. Hence, at first sight, it seems that there is no natural definition of a uniform distribution of probability over that space.

More generally, searching a natural generalization of the Impartial Culture to the utility space naturally leads to investigate different topics about the geometry of this quotient space and to get a better understanding of its properties related to algebra, topology and measure theory.

We emphasize that our goal is not to propose a measure that represents real-life preferences. Such approach would follow from observation and experimental studies rather than theoretical work. Instead, we aim at identifying a measure that can play the role of a neutral, reference, measure. This will give a model for uniformness to which real distributions of utilities can be compared. For example, if an observed distribution has higher density than the reference measure in certain regions of the utility space, this could be interpreted as a non-uniform culture for the population under study and give an indication of an overall trend for these regions. With this aim in mind, we will assume a symmetry hypothesis over the different candidates.

\noindent\textbf{Contributions and plan.}
The rest of the paper is organized as follows. In Section~\ref{sec:modele_VNM}, we quickly present Von Neumann--Morgenstern framework and define the utility space. In Section~\ref{sec:dualite}, we show that the utility space may be seen as a quotient of the dual of the space of pairs of lotteries over the candidates. In Section~\ref{sec:operateurs}, we naturally define an inversion operation, that corresponds to reversing preferences while keeping their intensities, and a summation operation, that is characterized by the fact that it preserves unanimous preferences.

Since the utility space is a manifold, it is a natural wish to endow it with a Riemannian metric. In Section~\ref{sec:riemann}, we prove that the only Riemannian representation that preserves the natural projective properties and the \emph{a priori} symmetry between the candidates is the round metric. Finally, in Section~\ref{sec:pratique}, we use this result to give a canonical generalization of the Impartial culture and to suggest the use of Von Mises--Fisher model to represent polarized cultures.

\vspace{-.4cm}
\section{Von Neumann--Morgenstern Model}\label{sec:modele_VNM} %
\vspace{-.2cm}

In this section, we define some notations and we quickly recall Von Neumann--Morgenstern framework in order to define the utility space.

Let $m \in \N \setminus \{0\}$. We will consider $m$ mutually exclusive options called \emph{candidates}, each one represented by an index in $\intent{1}{m}$.
A \emph{lottery} over the candidates is an $m$-tuple $(L_1,\ldots,L_m) \in (\R_{+})^m$ such that (s.t.) $\sum_{j=1}^{m}{L_j} = 1$. The set of lotteries is denoted $\ens{L}_m$.

The preferences of an agent over lotteries are represented by a binary relation $\moinssim$ over $\ens{L}_m$.

Von Neumann--Morgenstern theorem states that, provided that relation $\moinssim$ meets quite natural assumptions\footnote{The necessary and sufficient condition is that relation $\moinssim$ is complete, transitive, archimedean and independent of irrelevant alternatives.}, it can be represented by a utility vector $\vecteur{u} = (u_1,\dots,u_m) \in \R^m$, in the sense that following the relation $\moinssim$ is equivalent to maximizing the expected utility. Formally, for any two lotteries $L$ and $M$:
\vspace{-.15cm}
$$
L \moinssim M \Leftrightarrow \sum_{j=1}^{m}{L_j u_j} \leq \sum_{j=1}^{m}{M_j u_j}\text{.}
$$
\vspace{-.15cm}

Mathematical definitions, assumptions and proof of this theorem can be found in \cite{vonneumann2007games,mascolell1995microeconomic,kreps1990microeconomic}, and discussions about the experimental validity of the assumptions are available in \cite{fishburn1988nonlinear,mascolell1995microeconomic}.

For the purposes of this paper, a crucial point is that $\vecteur{u}$ is defined up to an additive constant and a positive multiplicative constant. Formally, for any two vectors $\vecteur{u}$ and $\vecteur{v}$, let us note $\vecteur{u} \sim \vecteur{v}$ iff $\exists a \in \mathopen(0,+\infty\mathclose), \exists b \in \R$ s.t. $\vecteur{v} = a \vecteur{u} + b \vecteur{1}$, where $\vecteur{1}$ denotes the vector whose $m$ coordinates are equal to 1. With this notation, if $\vecteur{u} \in \R^m$ is a utility vector representing $\moinssim$, then a vector $\vecteur{v} \in \R^m$ is also a utility vector representing $\moinssim$ iff $\vecteur{u} \sim \vecteur{v}$.%

In order to define the utility space, all vectors representing the same preferences must be identified as only one point.
The \emph{utility space} over $m$ candidates, denoted $\ens{U}_m$, is defined as the quotient space $\R^m / \sim$.
We call \emph{canonical projection} from $\R^m$ to $\ens{U}_m$ the function:
\vspace{-.2cm}
$$
\vectutil{\pi}:
\begin{array}{|cll}
\R^m & \to & \ens{U}_m \\
\vecteur{u} & \to & \vectutil{u} = \{\vecteur{v} \in \R^m \tq \vecteur{v} \sim \vecteur{u}\}.
\end{array}
$$
\vspace{-.15cm}
For any $\vecteur{u} \in \R^m$, we denote without ambiguity $\moinssim_{\vectutil{u}}$ the binary relation over $\ens{L}_m$ represented by $\vecteur{u}$
.

\begin{figure}
\begin{center}
\begin{tikzpicture}[scale=1.3,z={(-0.6,-0.3)}]
\fill[lightgray,path fading=east, fading angle=0] (1.3,1.3,1.3) -- ++(1,-0.3,0) -- (-1.3+1,-1.3-0.3,-1.3) -- (-1.3,-1.3,-1.3);
\draw[>=latex,->] (0,0,0)-- (1,1,1) node[left,yshift=-.1cm,xshift=0.03cm] {$\vecteur{1}$};
\draw[>=latex,->] (0,0,0)--(1.3+1,1.3-0.3,1.3) node[right,black,inner sep=2pt] {$\vecteur{u}^{(2)}$};
\draw[>=latex,->,scale=0.7] (0,0,0)--(1.3+1,1.3-0.3,1.3) node[below right,black,inner sep=1pt] {$\vecteur{u}^{(3)}$};
\draw[>=latex,->,scale=0.7] (0,0,0)--(1.3+1+0.55,1.3-0.3+0.55,1.3+0.55) node[above,black,inner sep=2pt] {$\vecteur{u}^{(1)}$};
\draw[>=latex,->] (0,0,0)--(-1.3+1,-1.3-0.3,-1.3) node[right,black,inner sep=3pt] {$\vecteur{u}^{(4)}$};
\draw (-1.2,0,0)--(1.2,0,0);
\draw (0,-1.2,0)--(0,1.2,0);
\draw (0,0,0)--(0,0,1.3);
\draw[>=latex,->]  (0,0,0)--(0,0,1) node[above left,inner sep=1pt] {$\vecteur{e_1}$};
\draw[>=latex,->]  (0,0,0)--(1,0,0) node[below] {$\vecteur{e_2}$};
\draw[>=latex,->]  (0,0,0)--(0,1,0) node[left] {$\vecteur{e_3}$};
\node at (0.9,-0.7) {$\vectutil{u}$};
\end{tikzpicture}
\end{center}
\vspace{-0.5cm}\caption{Space $\R^3$ of utility vectors for 3~candidates (without quotient).}\label{fig:halfplane}
\vspace{-.4cm}
\end{figure}
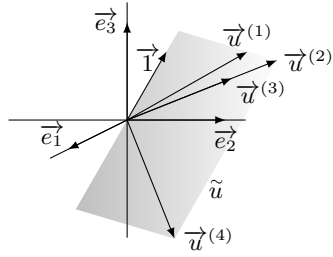

Figure~\ref{fig:halfplane} represents the space $\R^3$ of utility vectors for 3~candidates, without projecting to the quotient space. The canonical base of $\R^3$ is denoted $(\vecteur{e_1}, \vecteur{e_2}, \vecteur{e_3})$. 
Utility vectors $\vecteur{u}^{(1)}$ to $\vecteur{u}^{(4)}$ represent the same preferences as any other vector of the half-plane $\vectutil{u}$, represented in gray. More generally, each non-trivial point $\vectutil{u}$ in the quotient utility space corresponds to a half-plane in $\R^m$, delimited by the line $\linearspan(\vecteur{1})$, the linear span of $\vecteur{1}$.
The only exception is the point of total indifference $\vectutil{0}$. In $\R^m$, it does not correspond to a plane but to the line $\linearspan(\vecteur{1})$ itself.

\vspace{-.7cm}

\begin{figure}
\begin{center}
\subfigure[Edges of the unit cube in $\R^3$.]{%
~~~\begin{tikzpicture}[scale=1.2,z={(-0.6,-0.3)}]
\draw (-1.2,0,0)--(1.2,0,0);
\draw (0,-1.2,0)--(0,1.2,0);
\draw (0,0,0)--(0,0,1.3);
\draw[gray!50] (1,1,1) -- (-1,1,1);
\draw[gray!50] (1,1,1) -- (1,-1,1);
\draw[gray!50] (1,1,1) -- (1,1,-1);
\draw[gray!50] (-1,-1,-1) -- (-1,-1,1);
\draw[gray!50] (-1,-1,-1) -- (1,-1,-1);
\draw[gray!50] (-1,-1,-1) -- (-1,1,-1);
\draw[gray!50] (-1,1,0) -- (1,1,0) -- (1,-1,0) -- (-1,-1,0) -- cycle;
\draw[gray!50] (-1,0,1) -- (1,0,1) -- (1,0,-1) -- (-1,0,-1) -- cycle;
\draw[gray!50] (0,-1,1) -- (0,1,1) -- (0,1,-1) -- (0,-1,-1) -- cycle;
\draw[>=latex,->] (0,0,0)-- (1,1,1) node[right] {$\vecteur{1}$};
\draw[>=latex,->]  (0,0,0)--(1,0,0) node[below right] {$\vecteur{e_2}$};
\draw[>=latex,->]  (0,0,0)--(0,1,0) node[left] {$\vecteur{e_3}$};
\draw[>=latex,->]  (0,0,0)--(0,0,1) node[below left] {$\vecteur{e_1}$};
\draw[ultra thick] (1,-1,-1) -- (1,1,-1) -- (-1,1,-1) -- (-1,1,1) -- (-1,-1,1) -- (1,-1,1) -- cycle;
\end{tikzpicture}~~~
\label{fig:cubic}%
}
\hspace{1.2cm}
\subfigure[Circle in $\R^3$.]{%
\tikzset{
	MyPersp/.style={scale=1,x={(-0.866cm,-0.7cm)},y={(0.866cm,-0.7cm)},
    z={(0cm,0.7cm)}},
	MyPoints/.style={fill=white,draw=black,thick}
		}
\begin{tikzpicture}[MyPersp]
\begin{scope}[scale=1.4]
	\foreach \t in {0,30,...,180}
		{\draw[gray!50] ({cos(\t)},{sin(\t)},0)
			\foreach \rho in {5,10,...,360}
				{--({cos(\t)*cos(\rho)},{sin(\t)*cos(\rho)},
      {sin(\rho)})}--cycle;
		}
	\foreach \t in {-90,-60,...,90}
		{\draw[gray!50] ({cos(\t)},0,{sin(\t)})
			\foreach \rho in {5,10,...,360}
				{--({cos(\t)*cos(\rho)},{cos(\t)*sin(\rho)},
      {sin(\t)})}--cycle;
		}
	\foreach \t in {5,10,...,360}
		{--({cos(\t)},{sin(\t)},0)}--cycle;
\draw[>=latex,->]  (0,0,0)--(1,0,0) node[below left=3pt] {$\vecteur{e_1}$};
\draw[>=latex,->]  (0,0,0)--(0,1,0) node[below right=3pt] {$\vecteur{e_2}$};
\draw[>=latex,->]  (0,0,0)--(0,0,1) node[right=3pt] {$\vecteur{e_3}$};
\end{scope}

\rotateRPY{0}{-35.26}{45}
\begin{scope}[RPY,scale=1.4]
\draw[->,>=latex] (0,0,0) -- (1,0,0) node[below=-1pt] {$\vecteur{1}$};
\draw[ultra thick] [domain=0:pi,  samples=80,  smooth] plot (0, {cos(\x r)}, {sin(\x r)})  ;
\draw[thick,dashed] [domain=pi:2*pi,  samples=80,  smooth] plot (0, {cos(\x r)}, {sin(\x r)})  ;
\end{scope}
\end{tikzpicture}
\label{fig:circle}%
}
\caption{Two representations of $\ens{U}_3$.}
\end{center}
\vspace{-.7cm}
\end{figure}

To deal with utilities, conceptually and practically, it would be convenient to have a canonical representative $\vecteur{u}$ for each equivalence class $\vectutil{u}$. In Figure~\ref{fig:cubic}, for each non-indifferent $\vectutil{u}$, we choose its representative satisfying $\min(u_i) = -1$ and $\max(u_i) = 1$. The utility space $\ens{U}_3$ (except the indifference point) is represented in $\R^3$ by six edges of the unit cube.
In Figure~\ref{fig:circle}, we choose the representative satisfying $\sum u_i = 0$ and $\sum u_i^2 = 1$. In that case, $\ens{U}_3 \setminus \{\vectutil{0}\}$ is represented in $\R^3$ by the unit circle in the linear plane that is orthogonal to $\vecteur{1}$.

If we choose such a representation, the quotient space $\ens{U}_m$ can inherit the Euclidean distance from $\R^m$; for example, we can evaluate distances along the edges of the cube in Figure~\ref{fig:cubic}, or along the unit circle in Figure~\ref{fig:circle}. But it is clear that the result will depend on the representation chosen. Hence, it is an interesting question whether one of these two representations, or yet another one, has canonical reasons to be used. But before answering this question, we need to explore in more generality the geometrical properties of the utility space.

\section{Duality with the Tangent Hyperplane of Lotteries}\label{sec:dualite}

In this section, we remark that the utility space is a dual of the space of pairs of lotteries. Not only does it give a different point of view on the utility space (which we think is interesting by itself), but it will also be helpful to prove Theorem~\ref{thm:unanimite}, which characterizes the summation operator that we will define in Section~\ref{sec:operateurs}.

In the example represented in Figure~\ref{fig:duality}, we consider $m=3$ candidates and $\vecteur{u} = (\frac{5}{3}\,;-\frac{1}{3}\,;-\frac{4}{3})$.
The great triangle, or simplex, is the space of lotteries $\ens{L}_3$.
Hatchings are the agent's indifference lines: she is indifferent between any pair of lotteries on the same line (see \cite{mascolell1995microeconomic}, Section 6.B).
The utility vector $\vecteur{u}$ represented here is in the plane of the simplex,
but it is not mandatory: indeed, $\vecteur{u}$ can be arbitrarily chosen in its equivalence class $\vectutil{u}$. Nevertheless, it is a quite natural choice, because the component of $\vecteur{u}$ in the direction $\vecteur{1}$ (orthogonal to the simplex) has no meaning in terms of preferences.

\begin{figure}
\begin{center}
\begin{tikzpicture}[scale=2]
\draw[>=latex,->]  (0,0,0)--(1.2,0,0) node[right] {$L_1$};
\draw[>=latex,->]  (0,0,0)--(0,1.2,0) node[above] {$L_2$};
\draw[>=latex,->]  (0,0,0)--(0,0,1.2) node[left] {$L_3$};
\draw (1,0,0) node[above right]{1} --(0,1,0) node[above right]{1} --(0,0,1) node[below right]{1}--cycle;
\node[above right,inner sep=1pt] at (0.4,0.6,0) {$\ens{L}_3$};
\foreach  \c  in  {-1,-0.8,...,-0.2}  {
\draw (0,1+\c,-\c)--({(1+\c)/3},0,{(2-\c)/3});
}
\foreach  \c  in  {0,0.2,...,2}  {
\draw (\c/2,1-\c/2,0)--({(1+\c)/3},0,{(2-\c)/3});
}
\foreach  \c  in  {0.4}  {
\draw[>=latex,->] ({(2+5*\c)/12},{(2-\c)/4},{(2-\c)/6}) -- +(5/15,-1/15,-4/15) node[above right] {$\vecteur{u}$};
\draw ({(2+5*\c)/12},{(2-\c)/4},{(2-\c)/6}) ++(5/150,-1/150,-4/150) -- ++(3/150/1.73,-9/150/1.73,6/150/1.73) -- ++(-5/150,+1/150,+4/150);
}
\end{tikzpicture}
\end{center}
\vspace{-0.5cm}\caption{Space $\mathcal{L}_3$ of lotteries over 3~candidates.}\label{fig:duality}
\end{figure}
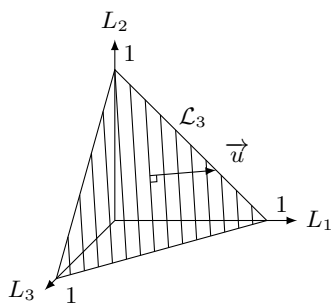

The utility vector $\vecteur{u}$ can be seen as a gradient of preference: at each point, it reveals in what directions the agent can find lotteries she likes better.
However, only the direction of $\vecteur{u}$ is  important, whereas its norm has no specific meaning. As a consequence, the utility space is not exactly a dual space, but rather a quotient of a dual space, as we will see more formally.

For any two lotteries $L = (L_1,\ldots,L_m)$ and $M = (M_1,\ldots,M_m)$, we call \emph{bipoint} from $L$ to $M$ the vector 
$
\bipoint{LM} = (M_1-L_1,\ldots,M_m-L_m).
$
We call \emph{tangent polytope} of $\ens{L}_m$ the set $\ens{T}$ of bipoints of $\ens{L}_m$.

We call \emph{tangent hyperplane} of $\ens{L}_m$:
$$
\ens{H} = \{(\Delta_1,\ldots,\Delta_m) \in \R^m \tq \sum_{j=1}^{m} \Delta_j = 0\}.
$$

In Figure~\ref{fig:duality}, the tangent polytope $\mathcal{T}$ is the set of the bipoints of the great triangle, seen as a part of a vector space (whereas $\ens{L}_m$ is a part of an affine space).
The tangent hyperplane $\mathcal{H}$ is the whole linear hyperplane containing $\mathcal{T}$.

Let us note $\bracket{\vecteur{u}}{\vecteur{v}}$ the canonical inner product of $\vecteur{u}$ and $\vecteur{v}$.
We call \emph{positive half-hyperplane associated to $\vecteur{u}$} the set
$
\vecteur{u}^{+} = \{\vecteur{\Delta} \in \ens{H} \tq \bracket{\vecteur{u}}{\vecteur{\Delta}} \geq 0\}.
$
By definition, a lottery $M$ is preferred to a lottery $L$ iff the bipoint $\bipoint{LM}$ belongs to this positive half-hyperplane:
$$
L \moinssim_{\vectutil{u}} M \Leftrightarrow \bracket{\vecteur{u}}{\bipoint{LM}} \geq 0 \Leftrightarrow \bipoint{LM} \in \vecteur{u}^{+}.
$$

Let $\ens{H}^\star$ be the dual space of $\ens{H}$, that is, the set of linear forms on~$\ens{H}$.
For any $\vecteur{u} \in \R^m$, we call \emph{linear form associated to $\vecteur{u}$} the following element of $\ens{H}^\star$:
$$
\bra{\vecteur{u}}:
\begin{array}{|cll}
\mathcal{H} & \to & \R \\
\vecteur{\Delta} & \to & \bracket{\vecteur{u}}{\vecteur{\Delta}}.
\end{array}
$$
We observed that the utility vector can be seen as a gradient, except that only its direction matters, not its norm. Let us formalize this idea.
For any $(f,g) \in (\ens{H}^\star)^2$, we denote $f \sim g$ iff these two linear forms are positive multiples of each other, that is, iff $\exists a \in \mathopen(0,+\infty\mathclose)$ s.t. $g = a f$. We denote $\classelin{\pi}(f) = \{g \in \ens{H}^\star \text{ s.t. } f \sim g\}$: it is the equivalence class of $f$, up to positive multiplication.

\begin{prop}
For any $(\vecteur{u},\vecteur{v}) \in (\R^m)^2$, we have:
$$
\vecteur{u} \sim \vecteur{v} \Leftrightarrow \bra{\vecteur{u}} \sim \bra{\vecteur{v}}.
$$

The following application is a bijection:
$$
\Theta:
\begin{array}{|cll}
\ens{U}_m & \to & \ens{H}^\star / \sim \\
\vectutil{\pi}(\vecteur{u}) & \to & \classelin{\pi}(\bra{\vecteur{u}}). 
\end{array}
$$
\end{prop}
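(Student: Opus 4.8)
The plan is to first prove the stated equivalence $\vecteur{u} \sim \vecteur{v} \Leftrightarrow \bra{\vecteur{u}} \sim \bra{\vecteur{v}}$, and then to read off from it that $\Theta$ is well-defined and injective, surjectivity coming from a Riesz-type argument. The single observation that makes everything work is that $\ens{H}$ is precisely the orthogonal complement of $\vecteur{1}$, i.e. $\ens{H}^\perp = \linearspan(\vecteur{1})$.

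For the forward implication I would start from $\vecteur{v} = a\vecteur{u} + b\vecteur{1}$ with $a > 0$: then for every $\vecteur{\Delta} \in \ens{H}$ one has $\bracket{\vecteur{v}}{\vecteur{\Delta}} = a\bracket{\vecteur{u}}{\vecteur{\Delta}} + b\sum_j \Delta_j = a\bracket{\vecteur{u}}{\vecteur{\Delta}}$ since $\sum_j \Delta_j = 0$, hence $\bra{\vecteur{v}} = a\bra{\vecteur{u}}$ as elements of $\ens{H}^\star$, so $\bra{\vecteur{u}} \sim \bra{\vecteur{v}}$. Conversely, if $\bra{\vecteur{v}} = a\bra{\vecteur{u}}$ on $\ens{H}$ with $a>0$, then $\vecteur{v} - a\vecteur{u}$ is orthogonal to every element of $\ens{H}$, hence lies in $\ens{H}^\perp = \linearspan(\vecteur{1})$; writing $\vecteur{v} - a\vecteur{u} = b\vecteur{1}$ gives $\vecteur{u} \sim \vecteur{v}$. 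It matters that both occurrences of $\sim$ demand a \emph{strictly positive} multiplicative constant, so the two notions of positive proportionality match exactly; in particular the degenerate class $\linearspan(\vecteur{1})$ in $\R^m$ corresponds to the singleton class $\{0\}$ in $\ens{H}^\star/\sim$.

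The equivalence immediately yields that $\Theta$ is well-defined: $\vectutil{\pi}(\vecteur{u}) = \vectutil{\pi}(\vecteur{v})$ means $\vecteur{u} \sim \vecteur{v}$, hence $\bra{\vecteur{u}} \sim \bra{\vecteur{v}}$, i.e. $\classelin{\pi}(\bra{\vecteur{u}}) = \classelin{\pi}(\bra{\vecteur{v}})$; and injectivity is the same chain of equivalences read backwards. For surjectivity it suffices to check that $\vecteur{u} \mapsto \bra{\vecteur{u}}$ maps $\R^m$ onto $\ens{H}^\star$: given $f \in \ens{H}^\star$, I would extend $f$ arbitrarily to a linear form on $\R^m$ (for instance by setting it to $0$ on $\linearspan(\vecteur{1})$), represent that form by a vector $\vecteur{u} \in \R^m$ via the canonical inner product, and restrict back to $\ens{H}$ to obtain $f = \bra{\vecteur{u}}$; then $\Theta(\vectutil{\pi}(\vecteur{u})) = \classelin{\pi}(f)$. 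Hence $\Theta$ is a bijection.

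I do not anticipate a genuine obstacle here: once $\ens{H}$ is identified with $\vecteur{1}^\perp$, the whole statement is elementary linear algebra. The only point I would take care to spell out rather than skip is the compatibility of the two relations $\sim$ at the level of the positive scalars, together with the behaviour at the indifference point $\vectutil{0}$, since that is exactly what makes the induced correspondence of equivalence classes a genuine bijection.
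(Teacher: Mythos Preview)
Your argument is correct and essentially identical to the paper's: both hinge on the observation that $\ens{H}^\perp=\linearspan(\vecteur{1})$, run the equivalence through ``$\vecteur{v}-a\vecteur{u}\in\ens{H}^\perp$'', and then read off well-definedness and injectivity of $\Theta$ from the two implications. The only difference is cosmetic---the paper presents the equivalence as a single chain and declares surjectivity ``obvious'', whereas you split the two directions and spell out the Riesz-type extension for surjectivity.
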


\begin{proof}
$\vecteur{u} \sim \vecteur{v}$

$\Leftrightarrow \exists a \in \mathopen(0,+\infty\mathclose), \exists b \in \R \tq 
\vecteur{v} - a \vecteur{u} = b \vecteur{1}$

$\Leftrightarrow \exists a \in \mathopen(0,+\infty\mathclose) \tq 
\vecteur{v} - a \vecteur{u}$ is orthogonal to $\ens{H}$

$\Leftrightarrow \exists a \in \mathopen(0,+\infty\mathclose) \tq 
\forall \vecteur{\Delta} \in \ens{H}, \bracket{\vecteur{v}}{\vecteur{\Delta}} = a \bracket{\vecteur{u}}{\vecteur{\Delta}}$

$\Leftrightarrow \bra{\vecteur{u}} \sim \bra{\vecteur{v}}$.

The implication $\Rightarrow$ proves that $\Theta$ is correctly defined: indeed, if $\vectutil{\pi}(\vecteur{u}) = \vectutil{\pi}(\vecteur{v})$, then $\classelin{\pi}(\bra{\vecteur{u}}) = \classelin{\pi}(\bra{\vecteur{v}})$.
The implication $\Leftarrow$ ensures that $\Theta$ is injective. Finally, $\Theta$ is obviously surjective.
\end{proof}

Hence, the utility space can be seen as a quotient of the dual~$\mathcal{H}^\star$ of the tangent space~$\mathcal{H}$ of the lotteries $\ens{L}_m$.
As noticed before, a utility vector may be seen, up to a positive constant, as a uniform gradient, i.e. as a linear form over $\mathcal{H}$ that reveals, for any point in the space of lotteries, in what directions the agent can find lotteries that she likes better.

\section{Inversion and Summation Operators}\label{sec:operateurs}

As a quotient of $\R^m$, the utility space inherits natural operations from $\R^m$: inversion and summation.
We will see that both these quotient operators have an intuitive meaning regarding preferences.
The summation will also allow to define \emph{lines} in Section~\ref{sec:riemann}, which will be a key notion for Theorem~\ref{thm_sphere}, characterizing the suitable Riemannian metrics.

We define the \emph{inversion} operator of $\ens{U}_m$ as:
$$
-: 
\begin{array}{|cll}
\ens{U}_m & \to & \ens{U}_m \\
\vectutil{\pi}(\vecteur{u}) & \to & \vectutil{\pi}(-\vecteur{u}).
\end{array}
$$
This operator is correctly defined and it is a bijection; indeed, $\vectutil{\pi}(\vecteur{u}) = \vectutil{\pi}(\vecteur{v})$ iff $\vectutil{\pi}(-\vecteur{u}) = \vectutil{\pi}(-\vecteur{v})$.
Considering this additive inverse amounts to reverting the agent's preferences, without modifying their relative intensities.

Now, we want to push the summation operator from $\R^m$ to the quotient $\mathcal{U}_m$.
We use a generic method to push an operator to a quotient space: considering $\vectutil{u}$ and $\vectutil{v}$ in $\mathcal{U}_m$, their antecedents are taken in $\R^m$ thanks to $\vectutil{\pi}^{-1}$, the sum is computed in $\R^m$, then the result is converted back into the quotient space $\mathcal{U}_m$, thanks to $\vectutil{\pi}$.

However, the result is not unique.
Indeed, let us take arbitrary representatives $\vecteur{u} \in \vectutil{u}$ and $\vecteur{v} \in \vectutil{v}$.
In order to compute the sum, we can think of any representatives. So, possible sums are $a \vecteur{u} + a' \vecteur{v} + (b + b') \vecteur{1}$, where $a$ and $a'$ are positive and where $b + b'$ is any real number.
Converting back to the quotient, we can get for example $\vectutil{\pi}(2 \vecteur{u} + \vecteur{v})$ and $\vectutil{\pi}(\vecteur{u} + 3 \vecteur{v})$, which are generally distinct.
As a consequence, the output is not a point in the utility space $\mathcal{U}_m$, but rather a set of points, i.e. an element of $\mathcal{P}(\ens{U}_m)$.

This example shows how we could define the sum of two elements $ \vectutil{u} $ and $ \vectutil{v} $. In order to be more general, we will define the sum of any number of elements of $\mathcal{U}_m$. Hence we also take $\mathcal{P}(\ens{U}_m)$ as the set of inputs.

We define the \emph{summation} operator as:
$$
\sum:
\begin{array}{|cll}
\mathcal{P}(\ens{U}_m) & \to & \mathcal{P}(\ens{U}_m) \\
A & \to & \left\{ \vectutil{\pi}\left(\sum_{i=1}^{n} \vecteur{u_i}\right), n \in \N,
(\vecteur{u_1},\ldots,\vecteur{u_n}) \in \left(\vectutil{\pi}^{-1}(A)\right)^n
 \right\}.
\end{array}
$$

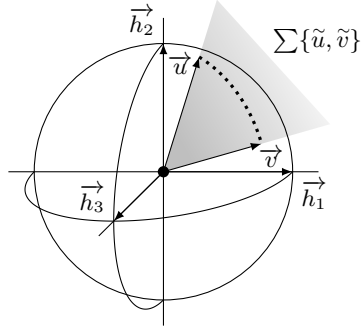
\begin{figure}
\begin{center}
\begin{tikzpicture}[scale=1.7]
\foreach  \theta  in  {20}  {
\fill[lightgray,path fading=south, fading angle=120] ({1.5*cos(\theta) * sin(20)}, {1.5*cos(20)}, {1.5*sin(\theta) * sin(20)}) -- (0,0,0)--({1.7*cos(\theta) * sin(70)}, {1.7*cos(70)}, {1.7*sin(\theta) * sin(70)});
\draw[>=latex,->] (0,0,0)--({cos(\theta) * sin(20)}, {cos(20)}, {sin(\theta) * sin(20)}) node[left,yshift=-.1cm,xshift=0.03cm] {$\vecteur{u}$};
\draw[>=latex,->] (0,0,0)--({cos(\theta) * sin(70)}, {cos(70)}, {sin(\theta) * sin(70)}) node[right,yshift=-.18cm,xshift=-0.15cm] {$\vecteur{v}$};
\draw [variable=\phi,domain=20:70,  samples=50,  smooth, very thick, dotted] plot  ({cos(\theta) * sin(\phi)}, {cos(\phi)}, {sin(\theta) * sin(\phi)},) ;
\node[right] at ({1.5*cos(\theta) * sin(40)}, {1.5*cos(40)}, {1.5*sin(\theta) * sin(40)},) {$\sum \{\vectutil{u},\vectutil{v}\}$};
}
\draw (-1.2,0,0)--(1.2,0,0);
\draw (0,-1.2,0)--(0,1.2,0);
\draw (0,0,0)--(0,0,1.3);
\draw[>=latex,->]  (0,0,0)--(1,0,0) node[below right] {$\vecteur{h_1}$};
\draw[>=latex,->]  (0,0,0)--(0,1,0) node[above left] {$\vecteur{h_2}$};
\draw[>=latex,->]  (0,0,0)--(0,0,1) node[above left] {$\vecteur{h_3}$};
\draw (0,0,0) circle (1);
\draw [domain=0:pi,  samples=80,  smooth] plot  (0, {cos(\x r)}, {sin(\x r)})  ;
\draw [domain=-pi/2:pi/2,  samples=80,  smooth] plot  ({sin(\x r)}, 0, {cos(\x r)})  ;
\node[fill=black,shape=circle,inner sep=1.5pt] at (0,0,0) {};
\end{tikzpicture}
\end{center}
\vspace{-0.5cm}\caption{Sum of two utility vectors in the utility space $\ens{U}_4$.}\label{fig:sum_vectors}
\end{figure}

\begin{exemple}
Let us consider $\mathcal{U}_4$, the utility space for 4 candidates. In Figure~\ref{fig:sum_vectors}, for the purpose of visualization, we represent its projection in $\mathcal{H}$, which is permitted by the choice of normalization constants $b$. Since $\mathcal{H}$ is a 3-dimensional space, let $(\vecteur{h_1},\vecteur{h_2},\vecteur{h_3})$ be an orthonormal base of it.

For two non-trivial utility vectors $\vectutil{u}$ and $\vectutil{v}$, the choice of normalization multiplicators~$a$ allow to choose representatives $\vecteur{u}$ and $\vecteur{v}$ whose Euclidean norm is~1.

In this representation, the sum $\sum \{\vectutil{u},\vectutil{v}\}$ consists of utilities corresponding to vectors $a \vecteur{u} + a' \vecteur{v}$, where $a$ and $a'$ are nonnegative. Indeed, we took a representation in $\mathcal{H}$, so all normalization constants $b$ vanish. Moreover, $a$, $a'$ or both can be equal to zero because our definition allows to ignore $\vecteur{u}$, $\vecteur{v}$ or both.
Up to taking representatives of unitary norm for non-trivial utility vectors, let us note that the sum $\sum \{\vectutilapprob{u},\vectutilapprob{v}\}$ can be represented by the dotted line and the point $\vecteur{0}$ of total indifference.
\end{exemple}

Geometrically, the sum is the quotient of the convex hull of the inputs. Note that this convex hull is actually a convex cone. We will see below the interpretation of the sum in terms of preferences.

Due to our definition of the sum operator, we consider the closed cone: for example, the inputs themselves (e.g. $\vectutil{u}$) fit in our definition, and so is the total indifference $\vectutil{\pi}(\vecteur{0})$.
That would not be the case if we took only $\vectutil{\pi}(a \vecteur{u} + a' \vecteur{v} + b \vecteur{1})$, where $a > 0$ and $a' > 0$.
The purpose of this convention is to have a concise wording for Theorem~\ref{thm:unanimite} that follows.

We now prove that, if $A$ is the set of the utility vectors of a population, then $\sum A$ is the set of utility vectors that respect the unanimous preferences of the population.

\begin{thm}[characterization of the sum]\label{thm:unanimite}
Let $A \in \mathcal{P}(\ens{U}_m)$ and $\vectutil{v} \in \ens{U}_m$.

The following conditions are equivalent.
\begin{enumerate}
	\item $\vectutil{v} \in \sum A$.
	\item $\forall (L,M) \in {\ens{L}_m}^2$:
$
\big[\left(
\forall \vectutil{u} \in A, L \moinssim_{\vectutil{u}} M
\right)
\Rightarrow L \moinssim_{\vectutil{v}} M\big].
$
\end{enumerate}
\end{thm}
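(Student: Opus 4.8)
The plan is to unwind both conditions into statements about the inner product $\bracket{\vecteur{u}}{\bipoint{LM}}$ and to recognize condition~(2) as a separation property in the hyperplane $\ens{H}$. By the definition of $\moinssim_{\vectutil{u}}$ given in Section~\ref{sec:dualite}, we have $L \moinssim_{\vectutil{u}} M$ iff $\bracket{\vecteur{u}}{\bipoint{LM}} \geq 0$, where $\vecteur{u}$ is any representative of $\vectutil{u}$; and since every bipoint $\bipoint{LM}$ lies in $\ens{T}$ and, conversely, the cone generated by $\ens{T}$ is all of $\ens{H}$, quantifying over pairs $(L,M) \in {\ens{L}_m}^2$ is equivalent to quantifying over all $\vecteur{\Delta} \in \ens{H}$ (both conditions in the theorem are invariant under positive rescaling of $\vecteur{\Delta}$, and $\vecteur{\Delta} = \vecteur{0}$ is trivial). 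So condition~(2) becomes: for every $\vecteur{\Delta} \in \ens{H}$, if $\bracket{\vecteur{u}}{\vecteur{\Delta}} \geq 0$ for all $\vectutil{u} \in A$, then $\bracket{\vecteur{v}}{\vecteur{\Delta}} \geq 0$.

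For the direction $(1) \Rightarrow (2)$: if $\vectutil{v} \in \sum A$, pick representatives so that $\vecteur{v} = \sum_{i=1}^n \vecteur{u_i}$ with $\vecteur{u_i} \in \vectutil{\pi}^{-1}(A)$ (working in $\ens{H}$, so that normalization constants $b$ vanish; this is harmless since only the $\ens{H}$-components enter the inner products against $\vecteur{\Delta} \in \ens{H}$). Then for any $\vecteur{\Delta} \in \ens{H}$ with $\bracket{\vecteur{u}}{\vecteur{\Delta}} \geq 0$ for all $\vectutil{u} \in A$, in particular $\bracket{\vecteur{u_i}}{\vecteur{\Delta}} \geq 0$ for each $i$, hence $\bracket{\vecteur{v}}{\vecteur{\Delta}} = \sum_i \bracket{\vecteur{u_i}}{\vecteur{\Delta}} \geq 0$, which is condition~(2). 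This direction is routine.

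The substance is $(2) \Rightarrow (1)$, and I expect this to be the main obstacle. The natural approach is contraposition combined with a separating hyperplane argument. Let $C$ be the closed convex cone in $\ens{H}$ generated by a choice of representatives (in $\ens{H}$) of the elements of $A$ together with $\vecteur{0}$ — this cone is exactly $\vectutil{\pi}^{-1}(\sum A) \cap \ens{H}$. Suppose $\vectutil{v} \notin \sum A$, i.e. a representative $\vecteur{v} \in \ens{H}$ does not lie in $C$. One must be slightly careful that $C$ is genuinely closed: when $A$ is finite this is the standard fact that a finitely generated cone is closed, and for general $A$ one should either argue that only finitely many generators are ever needed to express a point (Carathéodory for cones, giving closedness of the relevant finite subcones) or restrict attention to the closure and check it changes nothing for condition~(2) — I would flag this as the one point needing care. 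Granting that $C$ is a closed convex cone not containing $\vecteur{v}$, the separation theorem for a point and a closed convex set yields $\vecteur{\Delta} \in \ens{H}$ with $\bracket{\vecteur{v}}{\vecteur{\Delta}} < 0 \leq \bracket{\vecteur{w}}{\vecteur{\Delta}}$ for all $\vecteur{w} \in C$ (the cone structure forces the separating functional to be nonnegative on all of $C$, not merely to separate strictly). In particular $\bracket{\vecteur{u}}{\vecteur{\Delta}} \geq 0$ for every representative $\vecteur{u}$ of an element of $A$, while $\bracket{\vecteur{v}}{\vecteur{\Delta}} < 0$; translating back via the correspondence with pairs of lotteries, this produces a pair $(L,M)$ with $L \moinssim_{\vectutil{u}} M$ for all $\vectutil{u} \in A$ but not $L \moinssim_{\vectutil{v}} M$, negating condition~(2). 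This completes the contrapositive, and hence the equivalence.
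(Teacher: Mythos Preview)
Your approach is essentially the paper's: both reduce condition~(2) to an inner-product statement over $\ens{H}$ by using that the polytope $\ens{T}$ generates $\ens{H}$ under positive scaling, and then identify this statement with membership of $\vecteur{v}$ in the convex cone generated by $\vectutil{\pi}^{-1}(A)$. The paper compresses both implications into a single chain of equivalences, writing the crucial step (from $\bigcap_{\vecteur{u}\in\vectutil{\pi}^{-1}(A)} \vecteur{u}^{+} \subset \vecteur{v}^{+}$ to ``$\vecteur{v}$ is in the convex cone of $\vectutil{\pi}^{-1}(A)$'') as a consequence of ``the duality seen in Section~\ref{sec:dualite}''. Your separating-hyperplane argument is precisely the content of that step, so you are unpacking what the paper treats as a black box; the mathematics is the same.

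Your flag about closedness is well placed and is not addressed in the paper's proof either. Note, however, that your proposed repair via Carath\'eodory does not actually work: Carath\'eodory for cones tells you that every point of the conical hull lies in the cone generated by at most $\dim\ens{H}$ generators, but a union of closed cones over infinitely many finite subsets need not be closed. Concretely, if the representatives of $A$ in $\ens{H}$ accumulate on a ray not in the conical hull, that limit ray satisfies condition~(2) but fails condition~(1) with the paper's definition of $\sum$. So for infinite $A$ the statement can genuinely fail as written; the honest fix is either to assume $A$ finite (which is the case of interest in the paper) or to replace $\sum A$ by its closure.
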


\begin{proof}
First, let us remark that the tangent polytope $\mathcal{T}$ generates the tangent hyperplane $\ens{H}$ by positive multiplication. That is: 
$$
\forall \vecteur{\Delta} \in \ens{H}, \exists \vecteur{LM} \in \ens{T}, \exists \lambda \in \mathopen(0,+\infty\mathclose)
\tq \vecteur{\Delta} = \lambda \vecteur{LM}.
$$
Indeed, $\mathcal{T}$ contains a neighborhood of the origin in vector space $\ens{H}$.

Let $\vecteur{v} \in \vectutil{\pi}^{-1}(\vectutil{v})$. We have the following equivalences.
\begin{itemize}
	\item $\forall (L,M) \in {\ens{L}_m}^2, \left( \forall \vectutil{u} \in A, L \moinssim_{\vectutil{u}} M \right) \Rightarrow L \moinssim_{\vectutil{v}} M$,
	\item $\forall \vecteur{LM} \in \ens{T}, \left( \forall \vecteur{u} \in \vectutil{\pi}^{-1}(A), \bracket{\vecteur{u}}{\vecteur{LM}} \geq 0 \right) \Rightarrow \bracket{\vecteur{v}}{\vecteur{LM}} \geq 0$,
	\item $\forall \vecteur{\Delta} \in \ens{H}, \left( \forall \vecteur{u} \in \vectutil{\pi}^{-1}(A), \bracket{\vecteur{u}}{\vecteur{\Delta}} \geq 0 \right) \Rightarrow \bracket{\vecteur{v}}{\vecteur{\Delta}} \geq 0$ (because $\ens{T}$ generates $\ens{H}$),
	\item $\bigcap_{\vecteur{u} \in \vectutil{\pi}^{-1}(A)} \vecteur{u}^{+} \subset \vecteur{v}^{+}$,
	\item $\vecteur{v}$ is in the convex cone of $\vectutil{\pi}^{-1}(A)$ (because of the duality seen in Section~\ref{sec:dualite}),
	\item $\vectutil{v} \in \sum A$.
\end{itemize}\end{proof}

\begin{exemple}
Let us consider a non-indifferent $\vectutil{u}$ and let us examine the sum of $\vectutil{u}$ and its additive inverse $-\vectutil{u}$.
By a direct application of the definition, the sum consists of $\vectutil{u}$, $-\vectutil{u}$ and~$\vectutil{0}$.

However, we have just proved that the sum is the subset of utility vectors preserving the unanimous preferences over lotteries. Intuitively, we could think that, since $\vectutil{u}$ and $-\vectutil{u}$ seem to always disagree, any utility vector $\vectutil{v}$ respects the empty set of their common preferences; so, their sum should be the whole space. But this is not a correct intuition.

Indeed, let us examine the example of $\vecteur{u} = (1,0,\ldots,0)$.
About any two lotteries $L$ and $M$, inverse opinions $\vectutil{u}$ and $-\vectutil{u}$ agree iff $L_1 = M_1$: in that case, both $\vectutil{u}$ and $-\vectutil{u}$ are indifferent between $L$ and $M$.
The only points of the utility space meeting this common property are $\vectutil{u}$ and $-\vectutil{u}$ themselves, as well as the indifference point~$\vectutil{0}$.
\end{exemple}

\section{Riemannian Representation of the Utility Space}\label{sec:riemann}

Since the utility space is a manifold, it is a natural desire to endow it with a Riemannian metric.
In this section, we prove that there is a limited choice of metrics that are coherent with the natural properties of the space and with the \emph{a priori} symmetry between the candidates.

First, let us note that the indifference point $\vectutil{0}$ must be excluded. Indeed, its unique open neighborhood is $\ens{U}_m$ in whole, and no distance is consistent with this property\footnote{Technically, this remark proves that $\ens{U}_m$ (with its natural quotient topology) is not a \emph{$T_1$ space}~\cite{guenard1985complements}.}. In contrast, $\ens{U}_m \setminus \{\vectutil{0} \}$ has the same topology as a sphere of dimension $m-2$, so it can be endowed with a distance.

Now, let us define the round metric. 
The quotient $\R^m / \linearspan(\vecteur{1})$ is identified to $\ens{H}$ and endowed with the inner product inherited from the canonical one of $\R^m$.
The utility space $\ens{U}_m \setminus \{\vectutil{0} \}$ is identified to the unit sphere of $\ens{H}$ and endowed with the induced Riemannian structure.
We note~$\xi_0$ this Riemannian metric on $\ens{U}_m \setminus \{\vectutil{0} \}$.

In order to get an intuitive vision of this metric, one can represent any position $ \vectutil{u} $ by a vector $ \vecteur{u} $ that verifies $\sum u_i = 0$ and $\sum u_i^2 = 1$.
We obtain the $(m-2)$-dimensional unit sphere of $\ens{H}$ and we consider the metric induced by the canonical Euclidean metric of $\R^m$. That is, distances are measured on the surface of the sphere, using the restriction of the canonical inner product on each tangent space. For $m = 3$, such a representation has already been illustrated in Figure~\ref{fig:circle}.

With this representation in mind, we can give a formula to compute distances with~$\xi_0$.
We denote $J$ the $m \times m$ matrix whose all coefficients are equal to 1 and
$P_\ens{H}$ the matrix of the orthogonal projection on~$\mathcal{H}$:
$$
P_\ens{H} = \text{Id} - \dfrac{1}{m} J.
$$
The canonical Euclidean norm of $\vecteur{u}$ is denoted $\| \vecteur{u} \|$.
For two non-indifferent utility vectors $\vecteur{u}$ and $\vecteur{v}$, the distance between $\vectutil{u}$ and $\vectutil{v}$ in the sense of metric $\xi_0$ is:
$$
d(\vectutil{u},\vectutil{v}) = \arccos \bigbracket{%
\dfrac{P_\ens{H} \vecteur{u}}{\| P_\ens{H} \vecteur{u} \|} 
}{%
\dfrac{P_\ens{H} \vecteur{v}}{\| P_\ens{H} \vecteur{v} \|}
}.
$$
If $\vecteur{u}$ and $\vecteur{v}$ are already unit vectors of $\ens{H}$, i.e. canonical representatives of their equivalence classes $\vectutil{u}$ and $\vectutil{v}$, then the formula amounts to $d(\vectutil{u}, \vectutil{v}) = \arccos\bracket{\vecteur{u}}{\vecteur{v}}$. 

A natural property for the distance would be that its geodesics coincide with the unanimity segments defined by the sum. Indeed, imagine that Betty with utilities $\vectutil{u}_B$ is succesfully trying to convince Alice with initial utilities $\vectutil{u}_A$ to change her mind. Assume that Alice evolves continuously, so that her utility follows a path from $\vectutil{u}_A$ to $\vectutil{u}_B$. Unless Betty argues in a quite convoluted way, it makes sense to assume on one hand that Alice's path is a shortest path, and on the other hand that whenever Alice and Betty initially agree on the comparison of two lotteries, this agreement continues to hold all along the path. This is precisely what assumption \ref{enumi:geodesics} below means: shortest paths preserve agreements that hold at both their endpoints.

We will now prove that for $m\geq 4$, the spherical representation is the only one that is coherent with the natural properties of the space and that respects the \emph{a priori} symmetry between candidates.

\begin{thm}[Riemannian representation of the utility space]\label{thm_sphere}
We assume that $m \geq 4$. Let $\xi$ be a Riemannian metric on $\ens{U}_m \setminus \{\vectutil{0} \}$.

Conditions \ref{enumi:hypotheses_spheres} and \ref{enumi:sphere} are equivalent.
\begin{enumerate}
\item\label{enumi:hypotheses_spheres}
\begin{enumerate}
	\item\label{enumi:geodesics} For any non-antipodal pair of points $\vectutil{u},\vectutil{v} \in \ens{U}_m \setminus \{\vectutil{0} \}$ (i.e. $\vectutil{v}\neq -\vectutil{u}$), the set
	$\sum\{\vectutil{u},\vectutil{v}\}$ of elements respecting the unanimous preferences of $\vectutil{u}$ and $\vectutil{v}$ is a segment of a geodesic of $\xi$; and
	\item\label{enumi:permutations} for any permutation $\sigma$ of $\intent{1}{m}$, the action $\Phi_\sigma$ induced on $\ens{U}_m \setminus \{\vectutil{0} \}$ by
$$
(u_1,\ldots,u_m) \to (u_{\sigma(1)},\ldots,u_{\sigma(m)})
$$
is an isometry.
\end{enumerate}
\item\label{enumi:sphere} $\exists \lambda \in \mathopen(0,+\infty\mathclose) \tq \xi = \lambda \xi_0$.
\end{enumerate}
\end{thm}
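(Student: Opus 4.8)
The plan is to work through the identification of $\ens{U}_m\setminus\{\vectutil 0\}$ with the unit sphere $S$ of $\ens{H}$, under which $\xi_0$ becomes the round metric of $S$ and, by the cone description of the summation operator given after Theorem~\ref{thm:unanimite}, the set $\sum\{\vectutil u,\vectutil v\}$ is precisely the minor great‑circle arc from $\vectutil u$ to $\vectutil v$ (the only point of $\sum\{\vectutil u,\vectutil v\}$ outside $S$ being $\vectutil 0$).

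The implication \ref{enumi:sphere}~$\Rightarrow$~\ref{enumi:hypotheses_spheres} is a routine verification: minor great‑circle arcs are minimizing geodesic segments of the round metric, and each $\Phi_\sigma$ is the restriction to $S$ of the orthogonal map of $\ens{H}$ induced by the permutation matrix — which preserves $\vecteur 1$, hence preserves $\ens{H}=\vecteur 1^{\perp}$ — so it is a $\xi_0$‑isometry; neither property is affected by rescaling $\xi_0$ by $\lambda>0$.

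For \ref{enumi:hypotheses_spheres}~$\Rightarrow$~\ref{enumi:sphere} I would proceed in three steps. \textbf{Step 1 (geodesics of $\xi$ are the great circles).} Through any $p\in S$ and any tangent direction at $p$ there passes a great circle, and by~\ref{enumi:geodesics} a short sub‑arc of it through $p$ in that direction is a $\xi$‑geodesic segment; since $\xi$‑geodesics are determined by their initial data and $\xi$ is complete ($S$ being compact), the maximal $\xi$‑geodesics are exactly the great circles, i.e. the identity of $S$ is a geodesic map from $(S,\xi)$ to the round sphere. \textbf{Step 2 (projective rigidity).} By Beltrami's theorem, a metric admitting a geodesic map onto a space of constant curvature has constant curvature; so $\xi$ has constant curvature $c$, and as $S$ is compact and simply connected (here $m-2\geq 2$) it carries no complete flat or hyperbolic metric, so $c>0$. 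Hence $\xi=\lambda\, g_A$ for some $\lambda>0$ and some positive‑definite symmetric bilinear form $A$ on $\ens{H}$, where $g_A$ is the pullback of $\xi_0$ by the projective transformation of $S$ attached to $A$ (these are precisely the metrics on the sphere with great‑circle geodesics, and $g_{\mathrm{Id}}=\xi_0$; moreover $g_A=g_{A'}$ forces $A'$ to be a positive multiple of $A$). \textbf{Step 3 (symmetry pins down $A$).} Writing $O_\sigma$ for the orthogonal map of $\ens{H}$ inducing $\Phi_\sigma$, naturality of the construction gives $\Phi_\sigma^{\ast}g_A=g_{O_\sigma^{\ast}A}$, so~\ref{enumi:permutations} forces $O_\sigma^{\ast}A$ to be a positive multiple of $A$; the resulting positive character of the finite group $S_m$ is trivial, hence $A$ is $S_m$‑invariant. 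The action of $S_m$ on $\ens{H}$ is the standard representation, which is irreducible for $m\geq 2$, so by Schur's lemma $A$ is a multiple of the canonical inner product; then $g_A=\xi_0$ and $\xi=\lambda\xi_0$.

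The main obstacle is Step~2: the projective rigidity of the round sphere, which I would obtain from Beltrami's theorem (equivalently, from the classification of Riemannian metrics on $S^{n}$, $n\geq 2$, whose geodesics are the great circles). One has to be careful that assumption~\ref{enumi:geodesics} only asserts that the sum‑arcs \emph{are} $\xi$‑geodesics, not a priori that all $\xi$‑geodesics are great circles — which is exactly what Step~1 supplies, and which incidentally shows that~\ref{enumi:geodesics} alone already implies~\ref{enumi:sphere}. A proof avoiding Beltrami's theorem would instead have to make heavier use of~\ref{enumi:permutations}, for instance starting from the fact that the isotropy group $S_{m-1}$ of the image of $\vecteur{e_i}$ acts irreducibly on the corresponding tangent space (so $\xi$ is round there) and propagating this along the great‑circle geodesics using the $S_m$‑invariance of the curvature of $\xi$.
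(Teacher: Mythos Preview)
Your overall strategy coincides with the paper's: use Beltrami's theorem to conclude constant positive curvature, realize $\xi$ as coming from some inner product on $\ens{H}$, then use the permutation symmetry to identify that inner product up to scalar. The paper carries out the middle step by building an isometry $\Psi:\ens{S}_{m-2}\to\ens{U}_m\setminus\{\vectutil 0\}$ and proving (via the fundamental theorem of projective geometry) that $\Psi$ is induced by a linear map $\Lambda$; your packaging of this as the classification ``$\xi=\lambda\,g_A$ for some inner product $A$'' is equivalent. Where you genuinely diverge is the final step: the paper pins down $A$ by an explicit argument about the half-lines $\ell_j=\R_+\Pi(\vecteur{e_j})$ forming a regular simplex for any invariant inner product, together with a uniqueness lemma for bases $(\vecteur{u_j})$ with $\vecteur{u_j}\in\ell_j$ and $\sum_{j<m}\vecteur{u_j}\in-\ell_m$. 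Your route via Schur's lemma on the irreducible standard representation of $S_m$ on $\ens{H}$ is shorter and arguably more transparent; the paper's argument has the virtue of being self-contained and of making visible the geometric role of the points $\vectutil\pi(\vecteur{e_j})$.

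There is, however, a genuine error in your closing remarks. You write that Step~1 ``incidentally shows that \ref{enumi:geodesics} alone already implies \ref{enumi:sphere}.'' This is false, and the paper explicitly points it out: \emph{any} inner product $\phi$ on $\ens{H}$ induces on $\ens{U}_m\setminus\{\vectutil 0\}$ a Riemannian metric satisfying \ref{enumi:geodesics}, so \ref{enumi:permutations} is not redundant. Your own Step~2 acknowledges this by producing the whole family $g_A$, and your Step~3 then uses \ref{enumi:permutations} in an essential way (through Schur) to single out $A$ up to scalar. So the incidental claim contradicts your own argument; simply delete it.
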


\begin{proof}
Since the implication \ref{enumi:sphere} $\Rightarrow$ \ref{enumi:hypotheses_spheres} is obvious, we now prove \ref{enumi:hypotheses_spheres} $\Rightarrow$ \ref{enumi:sphere}. The deep result behind this is a classical theorem of Beltrami, which dates back to the middle of the nineteenth century: see \cite{beltrami1866resolution} and \cite{beltrami1869essai}.

The image of a 2-dimensional subspace of $\ens{H}$ in $\ens{U}_m \setminus \{\vectutil{0} \}$  by the canonical projection $\vectutil{\pi}$ is called a \emph{line} in $\ens{U}_m \setminus \{\vectutil{0} \}$. This notion is deeply connected to the summation operator: indeed, the sum of two non antipodal points in $\ens{U}_m \setminus \{\vectutil{0} \}$ is a segment of the line joining them. Condition \ref{enumi:geodesics} precisely means that the geodesics of $\xi$ are the lines of 
$\ens{U}_m\setminus\{\vectutil{0}\}$.
Beltrami's theorem then states that
$\ens{U}_m \setminus \{\vectutil{0} \}$ has constant curvature. Note that this result is in fact more subtle in dimension $2$ (that is, for $m=4$) than in higher dimensions; see \cite{spivak1979vol3}, Theorem~1.18 and \cite{spivak1979vol4}, Theorem~7.2
for proofs.

Since $\ens{U}_m \setminus \{\vectutil{0} \}$ is a topological sphere, this constant curvature must be positive. Up to multiplying $\xi$ by a constant, we can assume that this constant curvature is $1$.
As a consequence, there is an isometry $\Psi:\ens{S}_{m-2}\to \ens{U}_m \setminus \{\vectutil{0} \}$, where $\ens{S}_{m-2}$ is the unit sphere of $\R^{m-1}$, endowed with its usual round metric.
The function $\Psi$ obviously maps geodesic to geodesics, let us 
deduce the following.
\begin{lem}
There is a linear map $\Lambda:\mathbb{R}^{m-1}\to \ens{H}$
inducing $\Psi$, that is such that:
$$\Psi\circ \Pi = \Pi \circ \Lambda,$$
where $\Pi$ denotes both projections $\R^{m-1}\to \ens{S}_{m-2}$ and
$\ens{H}\to \ens{U}_m \setminus \{\vectutil{0} \}$.%
\end{lem}

\begin{proof}
First, $\Psi$ maps any pair of antipodal points of $\ens{S}_{m-2}$ to 
a pair of antipodal points of $\ens{U}_m\setminus \{\vectutil{0}\}$: indeed in both cases antipodal pairs are characterized by the fact that there are more than one geodesic containing them. It follows that
$\Psi$ induces a map $\Psi'$ from the projective space
$\mathrm{P}(\mathbb{R}^{m-1})$ (which is the set of lines through the
origin in $\mathbb{R}^{m-1}$, identified with the set of pairs of
antipodal points of $\ens{S}_{m-2}$) to the projective space
$\mathrm{P}(\ens{H})$ (which is the set of lines through
the origin  in $\ens{H}$, identified with the set of pairs of
antipodal points of $\ens{U}_m\setminus \{\vectutil{0}\}$).

The fact that $\Psi$ sends geodesics of $\ens{S}_{m-2}$ to
geodesics of $\ens{U}_m\setminus \{\vectutil{0}\}$ and condition
\ref{enumi:geodesics} together imply that $\Psi'$ sends projective lines
to projective lines. 

As is well known, a one-to-one map $\mathbb{R}^n\to\mathbb{R}^n$
which sends lines to lines must be an affine map; a 
similar result holds in projective geometry, concluding that
$\Psi'$ must be a projective map.
See e.g. \cite{audin} for both results.

That $\Psi'$ is projective exactly means that 
there is a linear map
$\Lambda:\mathbb{R}^{m-1}\to \ens{H}$ inducing $\Psi'$,
which then induces 
$\Psi:\ens{S}_{m-2} \to\ens{U}_m\setminus\{\vectutil{0}\}$.
\end{proof}

Using $\Lambda$ to push the canonical inner product of $\R^{m-1}$, we get that there exists an inner product $(\vecteur{u},\vecteur{v}) \to \phi(\vecteur{u},\vecteur{v})$ on $\ens{H}$ that induces $\xi$, in the sense that $\xi$ is the Riemannian metric obtained by identifying $\ens{U}_m \setminus \{\vectutil{0} \}$ with the unit sphere defined in $\ens{H}$ by $\phi$ and restricting $\phi$ to it.

The last thing to prove is that $\phi$ is the inner product coming from the canonical one on $\R^m$. Note that hypothesis \ref{enumi:permutations} is mandatory, since any inner product on $\ens{H}$ does induce on $\ens{U}_m \setminus \{\vectutil{0} \}$ a Riemannian metric satisfying \ref{enumi:geodesics}.

Each canonical basis vector $\vecteur{e_j}=(0,\dots,1,\dots,0)$ defines a point in $\ens{U}_m \setminus \{\vectutil{0} \}$ and a half-line $\ell_j=\mathbb{R}_+ e_j$ in $\ens{H}$.
Condition \ref{enumi:permutations} ensures that these half-lines are permuted by some isometries of $(\ens{H},\phi)$.
In particular, there are vectors $\vecteur{u_j}\in\ell_j$ that have constant pairwise distance
(for $\phi$).
\begin{lem}
The family $\vecteur{u_1},\dots,\vecteur{u_{m-1}}$ is, up to
multiplication by a scalar, the unique basis of $\ens{H}$ such that
$\vecteur{u_j}\in\ell_j$ and $\sum_{j<m} \vecteur{u_j}\in -\ell_m$.
\end{lem}

\begin{proof}
First, by definition of the $\vecteur{u_j}$, these vectors form a regular simplex and $\sum_j \vecteur{u_j}=\vecteur{0}$. It follows that 
$\vecteur{u_1},\dots,\vecteur{u_{m-1}}$ has the required property and
we have left to show uniqueness.

Assume $\vecteur{v_1},\dots,\vecteur{v_{m-1}}$ is a basis of
$\ens{H}$ such that 
$\vecteur{v_j}\in\ell_j$ and $\sum_{j<m} \vecteur{v_j}\in -\ell_m$. Then 
there are positive scalars $a_1,\dots, a_m$ such that
$\vecteur{v_j} = a_j \vecteur{u_j}$ for all $j<m$, and
$\sum_{j<m} \vecteur{v_j} = a_m \sum_{j<m} \vecteur{u_j}$.

Then $\sum_{j<m} a_j \vecteur{u_j} = \sum_{j<m} a_m \vecteur{u_j}$,
and since the $u_j$ form a basis, it must hold $a_j=a_m$ for all $j$.
\end{proof}

Now consider the canonical inner product $\phi_0$ on $\ens{H}$ that comes from the canonical one on 
$\R^m$.
Since permutations of coordinates are isometries, we get that the vectors $\vecteur{v_j}=\Pi(\vecteur{e_j})$ (where $\Pi$ is now the orthogonal projection from $\mathbb{R}^m$ to $\mathcal{H}$) form a regular simplex for $\phi_0$, so that $\sum_j \vecteur{v_j}=\vecteur{0}$.
It follows that $\vecteur{u_j}=\lambda \vecteur{v_j}$ for some $\lambda>0$ and all $j$.
We deduce that the $\vecteur{u_j}$ form a regular simplex for both $\phi$ and $\phi_0$, which must therefore be multiple from each other.
This finishes the proof of Theorem~\ref{thm_sphere}.
\end{proof}

However, the implication \ref{enumi:hypotheses_spheres} $\Rightarrow$ \ref{enumi:sphere} in the theorem is not true for $m=3$.
For each non-indifferent utility vector, let us consider its representative verifying $\min(u_i) = 0$ and $\max(u_i) = 1$.
This way, $\ens{U}_m \setminus \{\vectutil{0} \}$ is identified to edges of the unit cube in $\R^3$, as in Figure~\ref{fig:cubic}.
We use this identification to endow $\ens{U}_m \setminus \{\vectutil{0} \}$ with the metric induced on these edges by the canonical inner product on $\R^3$.
Then conditions \ref{enumi:geodesics} and \ref{enumi:permutations} of the theorem are met, but not condition \ref{enumi:sphere}.

\medskip
Another remark is of paramount importance about this theorem. Since we have a canonical representative $\vecteur{u}$ for each equivalence class $\vectutil{u}$ as a unit vector of $\ens{H}$, we could be tempted to use it to compare utilities between several agents.

We stress on the fact that this representation cannot be used for interpersonal comparison of utility levels or utility differences.

For example, for two agents, consider the following representatives:
$$
\left\{
\begin{array}{l}
\vecteur{u} = \left( 0.00 ,  0.71 , -0.71 \right), \\
\vecteur{v} = \left( 0.57 , 0.22 , -0.79 \right).
\end{array}
\right.
$$

The fact that $v_3 < u_3$ does not mean that an agent with preferences $\vectutil{v}$ dislikes candidate~3 more than an agent with preferences $\vectutil{u}$.
Similarly, when changing from candidate 1 to candidate 2, the gain for agent $\vectutil{u}$ ($+0.71$) cannot be compared to the loss for agent $\vectutil{v}$ ($-0.35$).

Theorem~\ref{thm_sphere} conveys no message for interpersonal comparison of utilities. Indeed, utilities belonging to two agents are essentially incomparable in the absence of additional assumptions \cite{hammond1991interpersonal}. Taking canonical representatives on the $(m-2)$-dimensional sphere is only used to compute distances between two points in the utility space.

\section{Application: Probability Measures on the Utility Space}\label{sec:pratique}

Once the space is endowed with a metric, it is endowed with a natural probability measure: the uniform measure in the sense of this metric (this is possible because the space has a finite total measure).
We will denote $\mu_0$ this measure, which is thus the
normalized Riemannian volume defined by the metric $\xi_0$.

In practice, to draw vectors according to a uniform probability law over $\ens{U}_m \setminus \{\vectutil{0} \}$, it is sufficient to use a uniform law on the unit sphere in~$\mathcal{H}$. In other words, once one identifies
$\ens{U}_m \setminus \{\vectutil{0} \}$ with the unit sphere in
$\ens{H}$, then $\mu_0$ is exactly the usual uniform measure.

In the present case, the fact that the round sphere has many symmetries
implies additional nice qualities for $\mu_0$, which we sum up in a proposition.
\begin{prop}
Let $\mu$ be any probability measure on 
$\ens{U}_m\setminus \{\vectutil{0}\}$.
\begin{enumerate}
\item Assume that for all $\delta>0$, $\mu$ gives the same probability to
  all the balls in $\ens{U}_m\setminus\{\vectutil{0}\}$ of radius $\delta$
 (in the metric $\xi_0$); then $\mu=\mu_0$.
\item If $\mu$ is invariant under all isometries for the metric $\xi_0$,
  then $\mu=\mu_0$.
\end{enumerate}
\end{prop}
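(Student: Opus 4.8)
The plan is to treat the two items separately, using in both cases that $\ens{U}_m\setminus\{\vectutil{0}\}$ has been identified with the round unit sphere $S=\ens{S}_{m-2}$ of $\ens{H}$, whose isometry group $G\cong O(m-1)$ acts transitively and carries metric balls to metric balls of the same radius.

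For item~(2) I would invoke the classical uniqueness of the invariant probability measure on a homogeneous space of a compact group. Let $f$ be continuous on $S$; using the $G$-invariance of $\mu$ and integrating against the normalized Haar measure $dg$ of $G$, write $\int_S f\,d\mu=\int_G\!\int_S f(g\cdot x)\,d\mu(x)\,dg=\int_S\bigl(\int_G f(g\cdot x)\,dg\bigr)\,d\mu(x)$ (Fubini is licit, everything being finite and continuous). For fixed $x$ the inner integral is independent of $x$: given $x,x'$ there is $g_0\in G$ with $g_0x=x'$, and left-invariance of $dg$ gives $\int_G f(gx')\,dg=\int_G f(gg_0x)\,dg=\int_G f(gx)\,dg$. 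Hence $\int_S f\,d\mu$ equals this common constant, and the same computation applied to $\mu_0$ yields the same constant; thus $\int_S f\,d\mu=\int_S f\,d\mu_0$ for every continuous $f$, so $\mu=\mu_0$. (Alternatively, one can first prove item~(2) and then reduce item~(1) to it by noting that the Haar average $\bar\mu=\int_G g_*\mu\,dg$ is $G$-invariant, hence equals $\mu_0$, and coincides with $\mu$ on balls.)

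For item~(1), let $\beta(\delta)=\mu(B(\cdot,\delta))$ (constant by hypothesis) and $\beta_0(\delta)=\mu_0(B(\cdot,\delta))$ (constant because $\mu_0$ is $G$-invariant). First I would show $\beta=\beta_0$ by Fubini: the map $(x,y)\mapsto\mathbf{1}_{\{d(x,y)\le\delta\}}$ is Borel on $S\times S$ since $d$ is continuous, so
$$
\beta(\delta)=\int_S \mu(B(x,\delta))\,d\mu_0(x)
=\int_S\!\!\int_S \mathbf{1}_{\{d(x,y)\le\delta\}}\,d\mu(y)\,d\mu_0(x)
=\int_S \mu_0(B(y,\delta))\,d\mu(y)=\beta_0(\delta).
$$
Hence $\mu$ and $\mu_0$ assign the same mass to every metric ball of $S$, and it remains to upgrade this to $\mu=\mu_0$. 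For that I would use that $(S,d,\mu_0)$ is a compact Riemannian manifold, hence a doubling metric measure space on which the Besicovitch covering theorem, and therefore the Lebesgue–Besicovitch differentiation theorem, holds: for $\mu_0$-a.e.\ $x$ the Radon–Nikodym derivative of the absolutely continuous part $\mu^{ac}\ll\mu_0$ is $\lim_{\delta\to0}\mu(B(x,\delta))/\mu_0(B(x,\delta))=\lim_{\delta\to0}\beta(\delta)/\beta_0(\delta)=1$; thus $\mu^{ac}=\mu_0$, and since $\mu(S)=1=\mu_0(S)=\mu^{ac}(S)$ the singular part has zero mass, so $\mu=\mu_0$.

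The routine parts (Fubini, the averaging identity, and transitivity of $G$ both on $S$ and on balls of a given radius) are immediate. The one genuinely non-elementary ingredient, and the step I expect to be the main obstacle, is the last passage in item~(1) from "equal mass on every ball" to "equal measures": this is exactly where a differentiation theorem on the sphere is needed. A harmonic-analytic alternative would be to convolve $\mu-\mu_0$ with the indicator kernel of a spherical cap and use the Funk–Hecke theorem to kill every spherical-harmonic component as $\delta$ varies, but the differentiation route is cleaner to state. Note also that only small radii $\delta$ enter this last step, so no care about the antipodal radius / cut locus is required.
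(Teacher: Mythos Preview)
Your proof is correct. The paper itself gives essentially no proof beyond the remark that both characterizations are classical: it says that item~(1) is ``(a strengthening of) the definition of the Riemannian volume,'' and that item~(2) ``follows from the first and the fact that any two points on the round sphere can be mapped one to the other by an isometry.'' So the paper's logical order is (1)~$\Rightarrow$~(2): an isometry-invariant $\mu$ assigns equal mass to any two balls of the same radius (by transitivity of the isometry group), and then (1) applies. You instead prove~(2) directly via Haar averaging and prove~(1) independently through the Fubini identity $\beta=\beta_0$ followed by Lebesgue--Besicovitch differentiation. Both routes ultimately rest on the same differentiation-type fact for item~(1); your direct argument for~(2) is self-contained and arguably cleaner, while the paper's reduction makes the logical dependence between the two items transparent. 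One minor quibble: doubling does not by itself imply the Besicovitch covering property (it gives Vitali-type covering, which already suffices for the differentiation theorem you need); on a compact Riemannian manifold Besicovitch holds for independent reasons (local Euclidean charts), so your conclusion stands either way.
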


Both characterizations are classical; the first one is (a strengthening
of) the definition of the Riemannian volume, and the second one follows 
from the first and the fact that any two points on the round sphere can
be mapped one to the other by an isometry.

\begin{figure}
\subfigure[Uniform (Impartial Culture).]{
\begin{tikzpicture}[scale=1]
\node at (0,0) {\includegraphics[width=4cm]{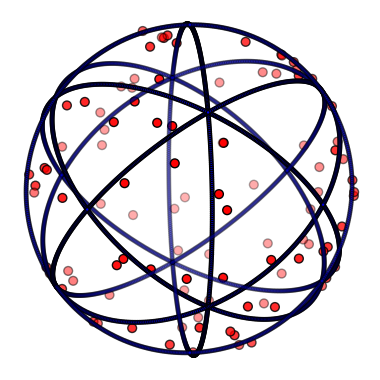}};
\node[shape=circle,fill=black,inner sep=1pt] (A) at (0.2,-1.23) {};
\draw (A) -- (2.2,-1.23) node[right] {Vertex $1 > 2, 3, 4$};
\node[shape=circle,fill=black,inner sep=1pt] (E) at (1.41,-0.16) {};
\draw (E) -- (2.2,-0.16) node[right] {Vertex $1, 4 > 2, 3$};
\draw[fill=black,opacity=0.4] (0.2,0.82) ..controls (0.25,-.2) .. (0.2,-1.23) .. controls (0.55,-1.08) and (0.95,-0.87) .. (1.41,-0.16) .. controls (1.17,.19) and (0.86,.46) .. (0.2,0.82) -- cycle;
\draw (0.9,-0.2) |- (2.2,0.80) node[right] {Facet $1 > 4 > 3 > 2$};
(0.2,0.82) -- cycle;
\draw (1,-0.69) -- (2.2,-0.69) node[right] {Edge $1 > 4 > 2, 3$};
\end{tikzpicture}\label{fig:spherical_culture}
}
\hspace{0.cm}
\subfigure[Von Mises--Fisher.]{
\begin{tikzpicture}
\node at (0,0) {\includegraphics[width=4cm]{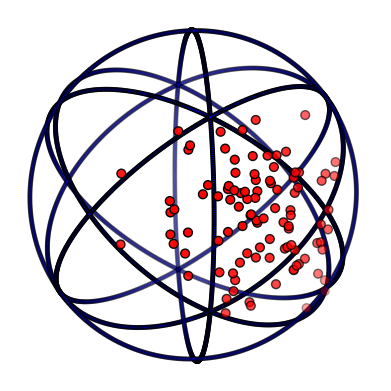}};
\end{tikzpicture}\label{fig:vmf}
}
\caption{Two distributions of 100~agents on $\ens{U}_4$.}
\end{figure}

In Figure~\ref{fig:spherical_culture}, we represent a distribution with 100~agents drawn uniformly and independently on the sphere, with 4 candidates. Like for Figure~\ref{fig:sum_vectors}, we represented only the unit sphere of $\mathcal{H}$.

The solid dark lines draw the \emph{permutohedron}, a geometrical figure representing the ordinal aspect of these preferences. Each \emph{facet} is constituted of all the points who share the same strict order of preference. A utility vector belongs to an \emph{edge} if it has only three distinct utilities: for example, if the agent prefers candidate 1 to 4, 4 to 2 and 3, but is indifferent between candidates 2 and 3. Finally, a point is a \emph{vertex} if it has only two distinct utilities: for example, if the agent prefers candidate 1 to all the others, but is indifferent between the others.

In this distribution, each agent has almost surely a strict order of preference. Each order has the same probability and agents are independent, hence this distribution is a natural generalization of the Impartial Culture when considering expected utilities.

Since the point $\vectutil{0}$ is a geometrical singularity, it is difficult to include it naturally in such a measure. If one wants to take it into account, the easiest way is to draw it with a given probability and to use the distribution on $\mathcal{U}_m \setminus \{ \vectutil{0} \}$ in the other cases.
That being said, we have just noticed that all other non-strict orders have a measure equal to 0
; so for a canonical theoretical model, it can be argued that a natural choice is to attribute a measure 0 to the indifference point also. 

Having a distance, hence a uniform measure, allows also to define other measures by their densities with respect to the uniform measure.
Here is an example of a law defined by its density.
Given a vector $\vecteur{u_0}$ in the unit sphere of $\mathcal{H}$ and $\kappa$ a nonnegative real number, the Von Mises--Fisher (VMF) distribution of pole $\vecteur{u_0}$ and concentration $\kappa$ is defined by the following density with respect to the uniform law on the unit sphere in $\mathcal{H}$:
$$
p(\vecteur{u}) = C_\kappa e^{\kappa \bracket{\vecteur{u}}{\vecteur{u_0}}},
$$
where $C_\kappa$ is a normalization constant.
Given the mean resultant vector of a distribution over the sphere, VMF distribution maximizes the entropy, in the same way that, in the Euclidean space, Gaussian distribution maximizes the entropy among laws with given mean and standard deviation.
Hence, without additional information, it is the ``natural'' distribution that should be used.
Figure~\ref{fig:vmf} represents such a distribution, with the same conventions as Figure~\ref{fig:spherical_culture}. To draw a VMF distribution, we used Ulrich's algorithm revised by Wood \cite{ulrich1984computer,wood1994simulation}.

Qualitatively, VMF model is similar to Mallows model, which is used for ordinal preferences \cite{mallows1957nonnull}. In the later, the probability of an order of preference $\sigma$ is:
$$
p(\sigma) = D_\kappa e^{- \kappa \delta(\sigma, \sigma_0)},
$$
where $\sigma_0$ is the mode of the distribution, $\delta(\sigma,\sigma_0)$ a distance between $\sigma$ and $\sigma_0$ (typically Kendall's tau distance), $\kappa$ a nonnegative real number (concentration) and $D_k$ a normalization constant. Both VMF and Mallows models describe a culture where the population is polarized, i.e. scattered around a central point, with more or less concentration.

However, there are several differences.
\begin{itemize}
\item VMF distribution allows to draw a specific point on the utility sphere, whereas Mallows' chooses only a facet of the permutohedron.
\item In particular, the pole of a VMF distribution can be anywhere in this continuum. For example, even if its pole is in the facet $1 > 4 > 3 > 2$, it can be closer to the facet $1 > 4 > \mathbf{2 > 3}$ than to the facet $\mathbf{4 > 1} > 3 > 2$. Such a nuance is not possible in Mallows model.
\item In the neighborhood of the pole, VMF probability decreases like the exponential of the square of the distance (because the inner product is the cosine of the distance), whereas Mallows probability decreases like the exponential of the distance (not squared).
\item VMF is the maximum entropy distribution, given a spherical mean and dispersion (similarly to a Gaussian distribution in a Euclidean space), whereas Mallows' model is not characterized by such a property of maximum entropy.
\end{itemize}

The existence of a canonical measure allows to define other probability measures easily in addition to the two we have just described. Such measures can generate artificial populations of agents for simulation purposes. They can also be used to fit the data from real-life experiments to a theoretical model, and as a neutral comparison point for such data.

To elaborate on this last point, let us stress that given
a (reasonably regular) distribution $\mu$ on a space such as
$\ens{U}_m\setminus\{\vectutil{0}\}$ there is \emph{a priori}
no way to define what it means for an element $\vectutil{u}$
to be more probable than another one $\vectutil{v}$. Indeed,
both have probability $0$ and what would make sense is
to compare the probability of being close to $\vectutil{u}$
to the probability of being close to $\vectutil{v}$. But 
one should compare neighborhoods of the same size, and one 
needs a metric to make this comparison. Alternatively, if one
has a reference distribution such as $\mu_0$, then it makes
sense to consider the density $f=\frac{d\mu}{d\mu_0}$,
which is a (continuous, say) function on $\ens{U}_m\setminus\{\vectutil{0}\}$. Then we can compare
$f(\vectutil{u})$ and $f(\vectutil{v})$ to say whether
one of these elements is more probable than the other according to
$\mu$. Note that in the present case,
comparing the probability of $\delta$-neighborhoods
for the metric $\xi_0$ or the density relative to $\mu_0$
gives the same result in the limit $\delta\to 0$, which is
the very definition of the Riemannian volume.

\section{Conclusion}\label{sec:conclusion}

We have studied the geometrical properties of the classical model of expected utilities, introduced by Von Neumann and Morgenstern, when candidates are considered symmetrical \emph{a priori}.
We have remarked that the utility space may be seen as a dual of the space of lotteries, that inversion and summation operators inherited from $\R^m$ have a natural interpretation in terms of preferences and that the space has a spherical topology when the indifference point is removed.

We have proved that the only Riemannian representation whose geodesics coincide with the projective lines naturally defined by the summation operator and which respects the symmetry between candidates is a round sphere.

All these considerations lay on the principle to add as little information as possible in the system, especially by respecting the \emph{a priori} symmetry between candidates.
This does not imply that the spherical representation of the utility space $\mathcal{U}_m$ is the most relevant one in order to study a specific situation.
Indeed, as soon as one has additional information (for example, a model that places candidates in a political spectrum), it is natural to include it in the model.
However, if one wishes, for example, to study a voting system in all generality, without focusing on its application in a specific field, it looks natural to consider a utility space with a metric as neutral as possible, like the one defined in this paper by the spherical representation.

\noindent\textbf{Acknowledgements.} The work presented in this paper has been partially carried out at LINCS (\url{http://www.lincs.fr}).


\begin{thebibliography}{10}

\bibitem{audin}
M.~Audin.
\newblock {\em Geometry}.
\newblock Universitext, Springer, 2003.

\bibitem{arrow1950difficulty}
Kenneth~J. Arrow.
\newblock A difficulty in the concept of social welfare.
\newblock {\em The Journal of Political Economy}, 58(4):pp. 328--346, 1950.

\bibitem{beltrami1866resolution}
E.~Beltrami.
\newblock R\'{e}solution du probl\`{e}me de reporter les points d'une surface
  sur un plan, de mani\`{e}re que les lignes g\'{e}od\'{e}siques soient
  repr\'{e}sent\'{e}e par des lignes droites.
\newblock {\em Annali di Matematica}, 1866.

\bibitem{beltrami1869essai}
E.~Beltrami.
\newblock Essai d'interpr\'etation de la g\'eom\'etrie noneuclid\'eenne.
  {T}rad. par {J}. {H}o\"uel.
\newblock {\em Ann. Sci. \'Ecole Norm. Sup.}, 6:251--288, 1869.

\bibitem{fishburn1970utility}
{P.C}. Fishburn.
\newblock {\em Utility Theory for Decision Making}.
\newblock Wiley, New York, 1970.

\bibitem{fishburn1988nonlinear}
P.C. Fishburn.
\newblock {\em Nonlinear preference and utility theory}.
\newblock Johns Hopkins series in the mathematical sciences. Johns Hopkins
  University Press, 1988.

\bibitem{gibbard1978lotteries}
Allan Gibbard.
\newblock Straightforwardness of Game Forms with Lotteries as Outcomes.
\newblock {\em Econometrica},
  46(3):595--614, 1978.

\bibitem{guenard1985complements}
F.~Gu{\'e}nard and G.~Leli{\`e}vre.
\newblock {\em Compl{\'e}ments d'analyse}.
\newblock Number vol.~1 in Compl{\'e}ments d'analyse. E.N.S., 1985.

\bibitem{hammond1991interpersonal}
Peter~J. Hammond.
\newblock Interpersonal comparisons of utility: Why and how they are and should
  be made.
\newblock In {\em Interpersonal Comparisons of Well-Being}, pages 200--254.
  University Press, 1991.

\bibitem{kreps1990microeconomic}
David~M. Kreps.
\newblock {\em A Course in Microeconomic Theory}.
\newblock Princeton University Press, 1990.

\bibitem{mallows1957nonnull}
Colin~L Mallows.
\newblock Non-null ranking models. i.
\newblock {\em Biometrika}, pages 114--130, 1957.

\bibitem{mascolell1995microeconomic}
Andreu Mas-Colell, Michael~D. Whinston, and Jerry~R. Green.
\newblock {\em Microeconomic Theory}.
\newblock Oxford University Press, jun 1995.

\bibitem{spivak1979vol3}
M.~Spivak.
\newblock {\em A comprehensive introduction to differential geometry. {V}ol.
  {III}}.
\newblock Publish or Perish Inc., second edition, 1979.

\bibitem{spivak1979vol4}
M.~Spivak.
\newblock {\em A comprehensive introduction to differential geometry. {V}ol.
  {IV}}.
\newblock Publish or Perish Inc., second edition, 1979.

\bibitem{ulrich1984computer}
G.~Ulrich.
\newblock Computer generation of distributions on the m-sphere.
\newblock {\em Applied Statistics}, pages 158--163, 1984.

\bibitem{vonneumann2007games}
J.~Von~Neumann, O.~Morgenstern, H.W. Kuhn, and A.~Rubinstein.
\newblock {\em Theory of Games and Economic Behavior (Commemorative Edition)}.
\newblock Princeton Classic Editions. Princeton University Press, 2007.

\bibitem{vonneumann1944games}
John {Von Neumann} and Oskar Morgenstern.
\newblock {\em Theory of games and economic behavior}.
\newblock Princeton University Press, 1944.

\bibitem{wood1994simulation}
Andrew~T.A. Wood.
\newblock Simulation of the von mises fisher distribution.
\newblock {\em Communications in Statistics-Simulation and Computation},
  23(1):157--164, 1994.

\end{thebibliography}
\end{document}